\begin{document}

\let\doendproof\endproof
\renewcommand\endproof{~\hfill\qed\doendproof}

\newcommand{\leftp}[1]{\texttt{left}(#1)}
\newcommand{\rightp}[1]{\texttt{right}(#1)}

\newcommand{\lefte}[1]{\texttt{leftEdge}(#1)}
\newcommand{\righte}[1]{\texttt{rightEdge}(#1)}

\newcommand{\vis}[1]{\texttt{Vis}(#1)}

\spnewtheorem{myremark}{Remark}{\bfseries}{\itshape}
%\spnewtheorem{obs}{Observation}{\bf}{\it}
\newcommand{\down}[1]{\raisebox{-.4ex}{#1}}
% YOU CAN SET 0ex to have previous settings of i AND j IN THE TABLES.
%\newcommand{\col{#1}}{col}

\makeatletter
    \renewcommand*{\@fnsymbol}[1]{\ensuremath{\ifcase#1\or *\or \dagger\or \S\or
       \mathsection\or \mathparagraph\or \|\or **\or \dagger\dagger
       \or \ddagger\ddagger \else\@ctrerr\fi}}
\makeatother

\newcommand*\samethanks[1][\value{footnote}]{\footnotemark[#1]}

\title{Guarding the Vertices of an Orthogonal Terrain using Vertex Guards}

\titlerunning{Guarding the Vertices of an Orthogonal Terrain using Vertex Guards}

\author{
Saeed Mehrabi\inst{1}
}

\authorrunning{S. Mehrabi}

\institute{
Cheriton School of Computer Science, University of Waterloo, Waterloo, Canada.
\email{smehrabi@uwaterloo.ca}
}

%\urldef{\mailsa}\path|{mehrabi}@cs.umanitoba.ca|%

\newcommand{\keywords}[1]{\par\addvspace\baselineskip
\noindent\keywordname\enspace\ignorespaces#1}
%\newtheorem{observation}{Observation}{\bfseries}{\itshape}%

%\pagenumbering{arabic}
%\pagestyle{plain}

\maketitle

\begin{abstract}
A \emph{terrain} $T$ is an $x$-monotone polygonal chain in the plane; $T$ is orthogonal if each edge of $T$ is either horizontal or vertical. In this paper, we give an exact algorithm for the problem of guarding the convex vertices of an orthogonal terrain with the minimum number of reflex vertices.
\end{abstract}

\section{Introduction}
\label{sec:introduction}
A \emph{terrain} $T$ is an $x$-monotone polygonal chain in the plane; $T$ is called \emph{orthogonal} if each edge of $T$ is either horizontal or vertical. A point guard $g$ can see a point $p$ on $T$ if the line segment $gp$ lies on or above $T$. Generally, there are two types of guarding problems on a (not necessarily orthogonal) terrain $T$; namely, the \emph{continuous} terrain guarding and the \emph{discrete} terrain guarding problems. In the continuous terrain guarding problem, the objective is to guard the entire $T$ with the minimum number of point guards while in the \emph{discrete} terrain guarding problem, we are given two sets $P$ and $G$ of points on $T$, and the objective is to guard the points in $P$ with the minimum number of guards selected from $G$.

\noindent\paragraph{\bf Related Work.} The standard art gallery problem, where the objective is to guard a polygon with the minimum number of point guards, was first introduced by Klee in 1973~\cite{orourke1987}. Chv\'{a}tal~\cite{chvatal1975} showed that $\lfloor n/3 \rfloor$ point guards are always sufficient and sometimes necessary to guard a simple polygon with $n$ vertices. For the orthogonal art gallery problem, Kahn et al.~\cite{jeff1983} proved that $\lfloor n/4 \rfloor$ guards are always sufficient and sometimes necessary to guard a simple orthogonal polygon with $n$ vertices. Lee and Lin~\cite{lee1986} showed that the art gallery problem is \textsc{NP}-hard on simple polygons. The problem is also \textsc{NP}-hard on orthogonal polygons~\cite{dietmar1995} and on monotone polygons~\cite{krohn2013}. If the polygon is simple, then the art gallery problem is \textsc{APX}-hard~\cite{eidenbenz2001}. If the polygon is allowed to have holes, then the problem cannot be approximated by a polynomial-time algorithm with factor $((1-\epsilon)/12)\ln n$ for any $\epsilon>0$, where $n$ is the number of the vertices of the polygon~\cite{eidenbenz2001}. Kirkpatrick~\cite{Kirkpatrick15} gave an $O(\log \log OPT)$-approximation algorithm for the problem using \emph{vertex} guards (i.e., the point guards are restricted to be placed on the vertices of the polygon) on arbitrary simple polygons, where $OPT$ denotes the size of an optimal solution. Krohn and Nilsson~\cite{krohn2013} gave a constant-factor approximation algorithm for the problem using vertex guards on monotone polygons. They also gave an $O(OPT^2)$-approximation algorithm for the vertex guarding of orthogonal polygons, where $OPT$ denotes the size of an optimal solution.

The complexity of the terrain guarding problem was a long-standing open problem until King and Krohn~\cite{kingKr2011} showed in 2010 that the both continuous and discrete versions of the terrain guarding problem are \textsc{NP}-hard on arbitrary terrains. Ben-Moshe et al.~\cite{benMoshe2005} gave the first $O(1)$-approximation algorithm for the terrain guarding problem, and a 4-approximation algorithm was later given by Elbassioni et al.~\cite{elbassioniAlgorithmica2011}. Katz and Roisman~\cite{katz2008} gave a 2-approximation algorithm for the problem of guarding the vertices of an orthogonal terrain. Moreover, Durocher et al.~\cite{durocherLM2015} gave a linear-time algorithm for guarding the vertices of an orthogonal terrain under a directed visibility model. Gibson et al.~\cite{gibson2009} gave a polynomial-time approximation scheme (PTAS) for the discrete version of the terrain guarding problem, and a PTAS for the continuous version of the problem was given by Friedrichs et al.~\cite{friedrichs2014}. To our knowledge, however, the complexity of the terrain guarding problem on orthogonal terrains remains open. We note that the hardness result of King and Krohn~\cite{kingKr2011} does not seem to be applied to the problem on orthogonal terrains; see~\cite{durocherLM2015} for a discussion on this.

\noindent\paragraph{\bf Our Result.} In this paper, we are interested in a variant of the discrete terrain guarding problem in which the sets $P$ and $G$ are both subsets of the vertices of $T$. For an orthogonal terrain $T$, we denote the set of vertices and edges of $T$ by $V(T)$ and $E(T)$, respectively. A vertex $u\in V(T)$ is called \emph{convex} (resp., \emph{reflex}) if the angle formed by the edges incident to $u$ above $T$ is $\pi/2$ (resp., $3\pi/2$) degrees. We denote the set of convex and reflex vertices of $T$ by $V_C(T)$ and $V_R(T)$, respectively. In this paper, we consider the discrete terrain guarding problem, where $P=V_C(T)$ and $G=V_R(T)$.

\begin{definition}{\bf \emph{(The Vertex Terrain Guarding (VTG) Problem on Orthogonal Terrains).}}
Given an orthogonal terrain $T$, compute a subset $S\subseteq V_R(T)$ of minimum cardinality that guards all the vertices in $V_C(T)$. That is, every vertex $u\in V_C(T)$ is seen by at least one reflex vertex in $S$. A reflex vertex $v$ guards (or, sees) a convex vertex $u$ if and only if every point on the line segment $uv$ (except its endpoints) lies strictly above $T$. 
\end{definition}

In this paper, we give an exact algorithm for the VTG problem on orthogonal terrains. We first give an Integer Program (IP) formulation of the VTG problem and will then prove that the solution to the relaxation of the IP is integral. To this end, we describe a permutation of the rows and columns of the corresponding constraint matrix of the IP relaxation, and prove that the resulting matrix is totally balanced. We remark that in the VTG problem \begin{inparaenum}[(i)] \item the line of visibility lies strictly above the terrain, and \item the objective is to guard only the convex vertices of the terrain (not all the vertices). \end{inparaenum}

\begin{figure}[t]
\centering%
\includegraphics[width=.60\textwidth]{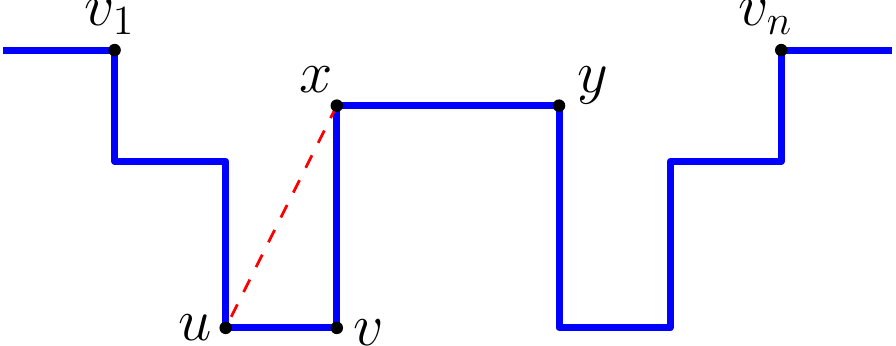}
\caption{An orthogonal terrain $T$. The vertices $u, v, x$ and $y$ are left convex, right convex, left reflex and right reflex, respectively. Moreover, $x$ can see vertex $u$, but it does not see vertex $v$.}
\label{fig:definitions}%
\end{figure}

\noindent\paragraph{\bf Notations.} Throughout this paper, let $T$ be an orthogonal terrain with $\lvert V(T)\rvert=n$. We denote the $x$- and $y$-coordinates of a point $p$ on $T$ by $x(p)$ and $y(p)$, respectively. We let $V(T)=\{v_1,\dots,v_n\}$ to denote the set of vertices of $T$ ordered from left to right, and $E(T)=\{e_1=(v_1,v_2),\dots,e_{n-1}=(v_{n-1},v_n)\}$ to be the set of edges of $T$ induced by the vertex set $V(T)$. Similar to Durocher et al.~\cite{durocherLM2015}, we partition the vertices of $T$ into four equivalences classes left or right endpoints of a horizontal edge of $T$, and whether the vertex is convex or reflex. We use $V_{LC}(T)$, $V_{RC}(T)$, $V_{LR}(T)$ and $V_{RR}(T)$ to respectively denote the left convex, right convex, left reflex, and right reflex subsets of $V(T)$. See Figure~\ref{fig:definitions} for an example of these definitions.

For the rest of this paper, we use terms ``terrain'' and ``guard'' to refer to an orthogonal terrain and a reflex vertex guard, respectively, unless otherwise stated. We assume that the visibility line segment $pq$ lies strictly above $T$, for every two points $p$ and $q$ on $T$ that see each other. Moreover, we assume that the leftmost and rightmost edges of $T$ are two horizontal rays starting from $v_1$ and $v_n$, respectively (see Figure~\ref{fig:definitions} for an example). For a reflex vertex $x$ of $T$, we denote the convex vertex directly below $x$ by $B(x)$.

%\begin{observation}
%\label{obs:rAndItsB}
%NO NEED. Let $u$ and $v$ be two reflex vertices of $T$. If vertex $u$ sees $B(v)$, then $u$ must also see $v$.
%\end{observation}

%NO NEED. Let $u$ and $v$ be two convex vertices of $T$. If $y(u)=y(v)$, then clearly $u$ and $v$ cannot see each other under directed visibility because the line segment $\overline{uv}$ is horizontal. If $y(u)\ne y(v)$, then depending on the $x$-coordinates of $u$ and $v$ the line segment $\overline{uv}$ will pass through the region below the horizontal edge incident to either $u$ or $v$ and, therefore, $u$ and $v$ cannot see each other. This leads to the following lemma.
%\begin{lemma}
%\label{lem:twoConvexVertices}
%NO NEED. No two convex vertices of $T$ can see each other under directed visibility.
%\end{lemma}

%%%%%%%%%%%%%%%%%%%%%%%% NEW SECTION %%%%%%%%%%%%%%%%%%%%%%%%%%%
%%%%%%%%%%%%%%%%%%%%%%%%%%%%%%%%%%%%%%%%%%%%%%%%%%%%%%%%%%%%%%

\section{The VTG Problem: An Exact Algorithm}
\label{sec:algorithm}
In this section, we give an exact algorithm for the VTG problem on an orthogonal terrain $T$ with $n$ vertices. Our algorithm is based on an Integer Program (IP) formulation of the problem. We first give an IP formulation of the problem, and will then show that the solution to the relaxation of the IP is integral. To this end, we show a permutation of the rows and columns of the constraint matrix of the relaxation that results in a totally balanced matrix. We first describe some properties of orthogonal terrains.

\begin{observation}[Durocher et al.~\cite{durocherLM2015}]
\label{obs:visibilityCondition}
Let $u$ be a reflex vertex of $T$. If $u$ is right reflex and sees a right convex vertex $v$ of $T$, then $x(u)<x(v)$ and $y(u)>y(v)$. Similarly, if $u$ is left reflex and sees a left convex vertex $v$ of $T$, then $x(u)>x(v)$ and $y(u)>y(v)$.
\end{observation}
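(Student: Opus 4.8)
The plan is to argue entirely locally at the convex vertex $v$, using the definition of a right (resp.\ left) convex vertex together with the hypothesis that a visibility segment lies \emph{strictly} above $T$. First I would fix the local shape of $T$ at a right convex vertex $v$. Since $v$ is the right endpoint of a horizontal edge, the edge of $T$ preceding $v$ (in left-to-right order) is horizontal at height $y(v)$ and the edge following $v$ is vertical; and since the angle above $T$ at $v$ equals $\pi/2$ (not $3\pi/2$), this vertical edge must be directed upward, so the part of $T$ immediately to the right of $v$ lies strictly above the level $y=y(v)$. Consequently there is a small disk $D$ around $v$ in which the set of points strictly above $T$ is exactly the open quadrant $Q:=\{(x,y):x<x(v),\ y>y(v)\}\cap D$: for $x<x(v)$ a nearby point is strictly above $T$ precisely when it lies above the horizontal edge, i.e.\ $y>y(v)$, whereas for $x\ge x(v)$ every point of $D$ lies on $T$ or below $T$.

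Now suppose $u$ is right reflex and sees $v$. Then every point of the segment $uv$ other than its two endpoints is strictly above $T$. Parametrize the segment by $p(t):=(1-t)\,v+t\,u$ for $t\in[0,1]$; for each $t\in(0,1)$ the point $p(t)$ is strictly above $T$, and $p(t)\to v$ as $t\to 0^{+}$, so for all sufficiently small $t>0$ we have $p(t)\in D$ and hence $p(t)\in Q$. Since $x\bigl(p(t)\bigr)=x(v)+t\,\bigl(x(u)-x(v)\bigr)$ and $y\bigl(p(t)\bigr)=y(v)+t\,\bigl(y(u)-y(v)\bigr)$, the conditions $x(p(t))<x(v)$ and $y(p(t))>y(v)$ for all small $t>0$ force $x(u)-x(v)<0$ and $y(u)-y(v)>0$; that is, $x(u)<x(v)$ and $y(u)>y(v)$, which is the claim.

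The left reflex / left convex case is symmetric. At a left convex vertex $v$ the edge of $T$ following $v$ is horizontal at height $y(v)$ and $T$ arrives at $v$ from above along a vertical edge, so in a small disk around $v$ the set of points strictly above $T$ is the open quadrant $\{(x,y):x>x(v),\ y>y(v)\}$; running the same parametrized-segment argument gives $x(u)>x(v)$ and $y(u)>y(v)$.

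The one step that genuinely needs care is the local characterization of ``strictly above $T$'' at a convex vertex: one has to deduce the upward (resp.\ downward-arriving) orientation of the vertical edge at $v$ from the angle being $\pi/2$ --- this is where the partition into $V_{LC}(T),V_{RC}(T),V_{LR}(T),V_{RR}(T)$ and the $x$-monotonicity of $T$ are used --- and one has to be careful that being strictly above $T$ near a vertical edge means being above the \emph{whole} edge, not merely above its lower endpoint; picking the disk $D$ small enough (smaller than the two edges incident to $v$ and than the height of the vertical edge) takes care of this. Everything else then follows from continuity of $p(t)$ at $t=0$. I note that the argument only uses the type of the convex vertex $v$; the assumption that $u$ is reflex is needed merely so that $u$ is an admissible guard in the VTG problem.
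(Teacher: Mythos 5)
The paper states this observation without proof, importing it from Durocher et al.~\cite{durocherLM2015}, so there is no in-paper argument to compare against; your job here is to supply the missing elementary justification, and you do so correctly. Your local analysis is sound: at a right convex vertex $v$ the convexity condition (angle $\pi/2$ above $T$) forces the vertical edge to the right of $v$ to point upward, so in a sufficiently small disk the points strictly above $T$ form exactly the open quadrant $x<x(v)$, $y>y(v)$, and continuity of the visibility segment at its endpoint $v$ then pins down the signs of $x(u)-x(v)$ and $y(u)-y(v)$. You are also right to flag the two places where care is needed (orienting the vertical edge from the angle condition, and shrinking the disk below the length of the incident edges), and your remark that the hypothesis ``$u$ is right reflex'' is not actually used --- the conclusion holds for any point that sees $v$ --- is accurate and consistent with how the observation is applied later in the paper. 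The only cosmetic caveat is that the argument implicitly uses that every vertex of an orthogonal terrain is incident to exactly one horizontal and one vertical edge (so that ``right convex'' really does mean horizontal edge on the left, upward vertical edge on the right); this is standard for orthogonal chains but worth a sentence.
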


\begin{lemma}[Order Claim~\cite{benMoshe2005}]
\label{lem:orderClaim}
Let $p, q, r$ and $s$ be four vertices of a terrain $T$ such that $x(p)<x(q)<x(r)<x(s)$. If $p$ sees $r$ and $q$ sees $s$, then $p$ sees $s$.
\end{lemma}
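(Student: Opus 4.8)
The plan is to prove the claim directly, showing that the straight segment $ps$ already lies above $T$ rather than routing visibility through the crossing point of the two given segments. For points $a,b$ with $x(a)<x<x(b)$, write $h_{ab}(x)$ for the $y$-coordinate of the point of segment $ab$ at abscissa $x$ (an affine function of $x$ on $[x(a),x(b)]$), and write $T(x)$ for the $y$-coordinate of the point of $T$ at abscissa $x$ (well defined since $T$ is $x$-monotone). Since $p,s\in V(T)$, segment $ps$ meets $T$ at its two endpoints, so by the (strict) definition of visibility used here it suffices to prove $h_{ps}(x_0)>T(x_0)$ for every $x_0\in(x(p),x(s))$. As a first step I would extract two inequalities: since $p$ sees $r$ and $q\in V(T)$ with $x(p)<x(q)<x(r)$, the point of segment $pr$ at abscissa $x(q)$ is a non-endpoint of $pr$ and hence lies strictly above $T$, so (using $q\in T$) $y(q)<h_{pr}(x(q))=(1-\alpha)y(p)+\alpha y(r)$ with $\alpha=\frac{x(q)-x(p)}{x(r)-x(p)}\in(0,1)$; symmetrically, $y(r)<h_{qs}(x(r))=(1-\beta)y(q)+\beta y(s)$ with $\beta=\frac{x(r)-x(q)}{x(s)-x(q)}\in(0,1)$.

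The crux is to combine these into the statement that $q$, and by symmetry $r$, lies strictly below segment $ps$. Substituting the second inequality into the first and collecting the $y(q)$-terms gives $y(q)(1-\alpha+\alpha\beta)<(1-\alpha)y(p)+\alpha\beta y(s)$, and since $1-\alpha+\alpha\beta>0$,
\[
 y(q)<\frac{1-\alpha}{1-\alpha+\alpha\beta}\,y(p)+\frac{\alpha\beta}{1-\alpha+\alpha\beta}\,y(s).
\]
Writing $u=x(q)-x(p)$, $v=x(r)-x(q)$, $w=x(s)-x(r)$, one checks the elementary identity $\frac{\alpha\beta}{1-\alpha+\alpha\beta}=\frac{u}{u+v+w}=\frac{x(q)-x(p)}{x(s)-x(p)}$, so the two coefficients on the right are exactly those of $h_{ps}$ and the right-hand side equals $h_{ps}(x(q))$; hence $y(q)<h_{ps}(x(q))$. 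Applying the same computation to the terrain reflected about a vertical line — which preserves visibility and exchanges $p$ with $s$ and $q$ with $r$ — yields $y(r)<h_{ps}(x(r))$ as well.

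To finish: since $p$ lies on segment $ps$ and $r$ lies strictly below it at abscissa $x(r)$, affine interpolation along $pr$ gives $h_{pr}(x_0)<h_{ps}(x_0)$ for every $x_0\in(x(p),x(r)]$; likewise segment $qs$ lies strictly below $ps$ at every abscissa in $[x(q),x(s))$. Now take any $x_0\in(x(p),x(s))$; since $x(q)<x(r)$, the intervals $(x(p),x(r))$ and $(x(q),x(s))$ cover $(x(p),x(s))$, so $x_0$ lies in at least one of them. If $x_0\in(x(p),x(r))$, then $T(x_0)<h_{pr}(x_0)<h_{ps}(x_0)$, the first inequality because $p$ sees $r$; if $x_0\in(x(q),x(s))$, then $T(x_0)<h_{qs}(x_0)<h_{ps}(x_0)$ because $q$ sees $s$. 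Either way $h_{ps}(x_0)>T(x_0)$, which is what we needed, so $p$ sees $s$. The only step I expect to demand real care is the crux — obtaining ``$q$ below $ps$'' from ``$q$ below $pr$'' and ``$r$ below $qs$'' — but this reduces to the one-line elimination above together with the ratio identity $\frac{\alpha\beta}{1-\alpha+\alpha\beta}=\frac{x(q)-x(p)}{x(s)-x(p)}$ (immediate after clearing denominators in the $u,v,w$ notation); everything else is elementary comparison of segment heights and the covering of an interval by two overlapping subintervals, and the argument goes through verbatim for the strict notion of visibility defined in this paper.
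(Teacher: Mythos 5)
Your argument is correct, but note that the paper itself offers no proof of this lemma: it is imported verbatim from Ben-Moshe et al.~\cite{benMoshe2005}, so there is no in-paper argument to compare against. Your proof is a sound, self-contained replacement. The two hypotheses give $y(q)<h_{pr}(x(q))$ and $y(r)<h_{qs}(x(r))$ exactly as you state, and I checked the elimination: with $u,v,w$ as defined one gets $1-\alpha+\alpha\beta=\frac{v(u+v+w)}{(u+v)(v+w)}$ and $\alpha\beta=\frac{uv}{(u+v)(v+w)}$, so $\frac{\alpha\beta}{1-\alpha+\alpha\beta}=\frac{u}{u+v+w}$ and the companion coefficient is $\frac{v+w}{u+v+w}$, which are precisely the barycentric coefficients of $h_{ps}(x(q))$; hence $q$ (and by the reflection, $r$) lies strictly below $ps$. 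The interval-covering step is also fine, since $(x(p),x(r))\cup(x(q),x(s))=(x(p),x(s))$ when $x(q)<x(r)$. The standard proof in the literature instead observes that the segments $pr$ and $qs$ must cross at a point $c$ (because $pr$ is above $q$ at $x(q)$ and below $s$'s segment at $x(r)$) and then argues that the convex path $p\to c\to s$ lies below $ps$ and above $T$; your version replaces the crossing point with the explicit affine elimination plus the covering of $(x(p),x(s))$ by two subintervals, which is marginally longer but avoids any appeal to a picture. Two small points worth tightening: for an orthogonal terrain $T(x)$ is multivalued along vertical edges, so you should define $T(x_0)=\max\{y:(x_0,y)\in T\}$ (with which everything you wrote goes through); and the reflection argument can be avoided entirely by running the same elimination to solve for $y(r)$ instead of $y(q)$, giving $y(r)<\frac{w}{u+v+w}y(p)+\frac{u+v}{u+v+w}y(s)=h_{ps}(x(r))$ directly.
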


Let $u$ be a right reflex vertex and let $v$ be a left convex vertex of $T$. If $v$ is not adjacent to $u$, then it is easy to see that the line segment $uv$ would intersect the area below $T$. Moreover, if $v$ is adjacent to $u$ then clearly $u$ cannot see $v$ because the line segment $uv$ is not strictly above $T$. This means that a right reflex vertex of $T$ cannot see any left convex vertex of $T$. An analogous argument can be made to show that a left reflex vertex of $T$ cannot see any right convex vertex of $T$. Therefore, we have the following lemma.
\begin{lemma}
\label{lem:noRRorLLVisibility}
Let $u$ be a reflex vertex of $T$. If $u$ is right reflex (resp., left reflex), then $u$ cannot see any left convex (resp., right convex) vertex of $T$.
\end{lemma}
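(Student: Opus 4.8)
The plan is to prove Lemma~\ref{lem:noRRorLLVisibility} directly, following exactly the two-case reasoning sketched in the paragraph preceding the statement, but making each geometric claim precise. I will only argue the right reflex case; the left reflex case is symmetric (reflect the terrain across a vertical line, which swaps the roles of ``left'' and ``right'' while preserving $x$-monotonicity, orthogonality, convexity/reflexity, and the strict-above visibility condition).

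\medskip

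\noindent\textbf{Setup.} Let $u$ be a right reflex vertex and suppose, for contradiction, that $u$ sees some left convex vertex $v$. Recall that $u$ being right reflex means $u$ is the right endpoint of a horizontal edge of $T$ and the interior angle above $T$ at $u$ is $3\pi/2$; hence the edge of $T$ immediately to the right of $u$ is vertical and goes \emph{downward} from $u$. Similarly, $v$ being left convex means $v$ is the left endpoint of a horizontal edge, with interior angle $\pi/2$, so the edge immediately to the left of $v$ is vertical and goes \emph{upward} to $v$, and the edge immediately to the right of $v$ is horizontal. The first step is to split on the relative $x$-position of $u$ and $v$.

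\medskip

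\noindent\textbf{Case 1: $u$ and $v$ are adjacent on $T$.} Since $u$ is the right endpoint of its horizontal edge and $v$ is the left endpoint of its horizontal edge, adjacency forces the vertical edge $uv$ with $u$ above $v$ (i.e.\ $x(u)=x(v)$, $y(u)>y(v)$). Then the segment $uv$ coincides with an edge of $T$, so it does not lie \emph{strictly} above $T$ — every interior point of $uv$ lies \emph{on} $T$. By the strict-above visibility model (Definition of VTG, and the ``Notations'' paragraph), $u$ does not see $v$; contradiction.

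\medskip

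\noindent\textbf{Case 2: $u$ and $v$ are not adjacent.} Here I will show the open segment $uv$ must dip to or below $T$ somewhere, contradicting visibility. Consider the vertex or vertices of $T$ immediately neighbouring $u$ on the side toward $v$. If $x(v) > x(u)$: the edge of $T$ leaving $u$ to the right is the downward vertical edge, so just to the right of $u$ the terrain drops strictly below $y(u)$; meanwhile the segment $uv$ leaves $u$ with some finite slope, so in a neighbourhood to the right of $u$ the segment lies strictly above that dropped portion of $T$ — this alone is not yet a contradiction, so I instead look near $v$. The edge of $T$ entering $v$ from the left is the upward vertical edge, and $v$ is a \emph{local maximum} of $T$ in its immediate left neighbourhood in the sense that points on $T$ just to the left of $v$ have $y$-coordinate strictly less than $y(v)$; combined with $x(u)<x(v)$ and the segment approaching $v$ with finite slope, the segment lies on the vertical line below... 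The cleaner route, and the one I will actually write, uses Observation~\ref{obs:visibilityCondition} as a black box wherever possible: but that observation is stated only for a right reflex vertex seeing a right \emph{convex} vertex, not a left convex one, so it does not directly apply. Hence I fall back to the elementary argument: take the horizontal edge incident to $v$ on its right, with right endpoint $v'$ (so $v'$ is reflex or the vertex $v_n$), and note $y(v')=y(v)$ and $x(v')>x(v)$; the vertical edge incident to $v$ on the left has its upper endpoint at $v$, so the point of $T$ at $x$-coordinate $x(v)-\varepsilon$ has $y$-value $<y(v)$ for small $\varepsilon>0$. Since $x(u)<x(v)$, the segment $uv$ passes over this point $(x(v)-\varepsilon,\,*)$ at height $y(v) - \varepsilon\cdot(\text{slope of }uv)$, which could be above or below $y(v)-\delta$...

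\medskip

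\noindent\textbf{Main obstacle.} The real subtlety — and where I expect to spend the effort — is Case~2 with $x(u)<x(v)$, i.e.\ showing that a right reflex vertex cannot ``reach leftward–downward'' to a left convex vertex across intervening terrain. The key geometric fact to nail down is: immediately to the \emph{left} of a left convex vertex $v$, the terrain $T$ has a vertical edge going \emph{up into} $v$, so the region just left of and below $v$ is \emph{filled by} $T$ (the closed region under $T$), and any segment from a point with $x$-coordinate $<x(v)$ arriving at $v$ must, just before reaching $v$, pass through points with $x<x(v)$ and $y<y(v)$ — but those points are strictly below the top of that vertical edge only if the segment's slope is steep enough, and strictly above $T$ requires ruling out the segment crossing the vertical edge itself. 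I will make this rigorous by observing that the vertical edge immediately left of $v$ is part of $T$, it has $x$-coordinate exactly $x(v)$, its top endpoint is $v$, and its bottom endpoint $w$ satisfies $y(w)<y(v)$; the open segment $uv$, restricted to a small left-neighbourhood of $x(v)$, lies at $x$-coordinates strictly less than $x(v)$, hence strictly to the left of this vertical edge, so it cannot be ``above'' it in the terrain sense — rather, $T$ at those $x$-coordinates is the profile coming from $w$'s side, and one checks this profile is $\ge y(v) > $ (segment height there) when the segment descends toward $v$ from the upper-left, forcing the segment below $T$. Symmetrically handle $x(u)>x(v)$ using the downward vertical edge to the right of $u$. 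Assembling the three cases yields the contradiction, proving $u$ sees no left convex vertex; the left reflex statement follows by the mirror symmetry noted above. \qedhere is not needed since the environment will close normally.
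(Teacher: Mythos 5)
Your proposal is not a complete proof, and the place where it stalls is caused by a concrete error in the local geometry of a left convex vertex. You assert that the vertical edge at $v$ ``goes upward to $v$'' and that ``points on $T$ just to the left of $v$ have $y$-coordinate strictly less than $y(v)$.'' This is backwards. Since $v$ is the \emph{left} endpoint of its horizontal edge and the angle above $T$ at $v$ is $\pi/2$, the vertical edge incident to $v$ must go \emph{up from} $v$ to a vertex $w$ with $y(w)>y(v)$ (otherwise the angle above $T$ at $v$ would be $3\pi/2$ and $v$ would be left reflex). Consequently the terrain immediately to the left of $x(v)$ is at height $y(w)>y(v)$: there is a wall on $v$'s left. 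That is the fact that makes the lemma easy, and it is consistent with Observation~\ref{obs:visibilityCondition}, which says any left reflex guard seeing $v$ satisfies $x(u)>x(v)$. With your inverted picture the segment near $v$ ``could be above or below,'' which is exactly where your Case~2 dissolves into ellipses and an unproven ``one checks'' claim; the proposal as written never reaches a contradiction in the sub-case $x(u)<x(v)$, and the sub-case $x(u)>x(v)$ is only a one-line pointer to the wrong edge (the relevant obstruction there is $u$'s \emph{horizontal} edge, which lies to $u$'s left at height $y(u)$, not the downward vertical edge on $u$'s right).

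For comparison, the paper does not really prove this lemma either: it only offers the informal sentence that a non-adjacent segment ``would intersect the area below $T$.'' A correct short argument along your intended lines is: if $x(u)<x(v)$, then for small $\varepsilon>0$ the open segment $uv$ at $x=x(v)-\varepsilon$ has height $y(v)+O(\varepsilon)$, while $T$ there has height $y(w)>y(v)$, so the segment is below $T$ near $v$; if $x(u)>x(v)$, then strict visibility over $u$'s horizontal edge (at height $y(u)$, immediately left of $u$) forces $y(v)>y(u)$, while strict visibility over $v$'s horizontal edge (at height $y(v)$, immediately right of $v$) forces $y(u)>y(v)$, a contradiction; and if $x(u)=x(v)$ then $x$-monotonicity makes $u$ and $v$ the endpoints of a common vertical edge, so $uv$ lies on $T$. (Your adjacency case also overlooks that $u$ and $v$ can be adjacent via a shared \emph{horizontal} edge, though that case is equally immediate.) As it stands, the proposal contains the right skeleton but a false key claim and no completed contradiction, so it does not establish the lemma.
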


\begin{wrapfigure}{r}{0.35\textwidth}
\centering
\vspace{-10pt}
\includegraphics[scale=.60]{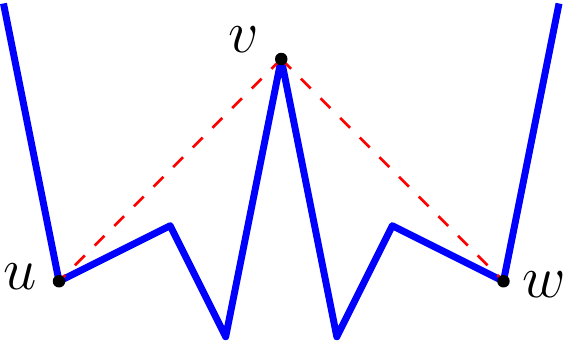}%
\caption{An arbitrary terrain in which the reflex vertex $v$ can see the left convex vertex $u$ and the right convex vertex $w$.}
\label{fig:arbitrary}%
\vspace{-30pt}
\end{wrapfigure}%

\begin{lemma}
\label{lem:dominantVertex}
Let $u$ be a right convex vertex and $v$ be a left convex vertex of a terrain $T$. Then, there is no reflex vertex of $T$ that sees both $u$ and $v$.
\end{lemma}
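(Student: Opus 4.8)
The plan is to prove this by contradiction, reducing it almost immediately to Lemma~\ref{lem:noRRorLLVisibility}. Suppose, for the sake of contradiction, that there is a reflex vertex $z$ of $T$ that sees both the right convex vertex $u$ and the left convex vertex $v$. Since $T$ is an orthogonal terrain in which every internal vertex is an endpoint of exactly one horizontal edge, the vertex $z$ falls into the four-way partition of $V(T)$ described in the Notations; being reflex, it lies either in $V_{LR}(T)$ or in $V_{RR}(T)$, that is, $z$ is either left reflex or right reflex.

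Now I would split into these two cases. If $z$ is right reflex, then Lemma~\ref{lem:noRRorLLVisibility} says $z$ cannot see any left convex vertex of $T$; in particular $z$ does not see $v$, contradicting our assumption. Symmetrically, if $z$ is left reflex, then by Lemma~\ref{lem:noRRorLLVisibility} the vertex $z$ cannot see any right convex vertex, so $z$ does not see $u$, again a contradiction. Since both cases are impossible, no reflex vertex of $T$ can see both $u$ and $v$, which is the claim.

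There is essentially no technical obstacle here once Lemma~\ref{lem:noRRorLLVisibility} is available: the entire content is that in an orthogonal terrain a single reflex vertex cannot be ``oriented'' toward both sides simultaneously, and this is exactly what fails on general terrains, as Figure~\ref{fig:arbitrary} illustrates. The only point worth stating explicitly is that the partition into left/right and convex/reflex is genuinely exhaustive on reflex vertices, so the two cases above cover everything. If one instead wanted a self-contained argument avoiding Lemma~\ref{lem:noRRorLLVisibility}, the only thing to verify carefully is that the segment from a right reflex vertex to any non-adjacent left convex vertex must dip below $T$ near the convex vertex (and symmetrically for the mirrored situation) — but this is precisely the reasoning already carried out in the paragraph preceding Lemma~\ref{lem:noRRorLLVisibility}, so invoking that lemma is the cleanest route.
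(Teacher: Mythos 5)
Your proof is correct and matches the paper's own argument, which likewise derives the lemma from Lemma~\ref{lem:noRRorLLVisibility} together with the fact that $\{V_{LR}(T), V_{RR}(T)\}$ partitions $V_R(T)$. You have simply written out the two cases that the paper leaves implicit.
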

\begin{proof}
The proof follows from Lemma~\ref{lem:noRRorLLVisibility} and the fact that $\{V_{LR}(T), V_{RR}(T)\}$ is a partition for $V_R(T)$.
\end{proof}

We remark that if $T$ is not orthogonal, then Lemma~\ref{lem:dominantVertex} does not hold. That is, it is possible for an arbitrary polygon that a reflex vertex guards both a left convex and a right convex vertex; see Figure~\ref{fig:arbitrary} for an example.

\noindent\paragraph{\bf Algorithm.} Let $P=V_{LC}(T)\cup V_{RC}(T)$ (i.e., the set of vertices that must be guarded), where $\lvert P\rvert = k$, and let $G=V_R(T)$ (i.e., the set of potential guards), where $\lvert G\rvert =k'$. The VTG problem can be formulated as follows:

\begin{align}%
\label{ip:vtg}%
\text{minimize }               & \sum_{i=1}^{k'}x_{r_i}\\
\nonumber \text{subject to }   & \sum_{r_i\in S(c_j)}x_{r_i}\geq 1, &\forall j=1,2,\dots, k\\%
\nonumber                      & x_{r_i}\in \{0, 1\} &\forall r_i\in V_R(T),%
\end{align}% 

\noindent where $x_{r_i}$ is a binary variable for the reflex vertex $r_i\in V_R(T)$ indicating whether $r_i$ is chosen as a guard, and $S(c_j)$ is the set of all reflex vertices that see the convex vertex $c_j$. Consider a binary matrix $M\in \{0, 1\}^{\lvert P\rvert\times \lvert G\rvert}$. Matrix $M$ is called \emph{totally balanced} if \begin{inparaenum}[(i)] \item $M$ does not contain a square submatrix with all row and column sums equal to 2, and \item $M$ has no identical columns. \end{inparaenum} Moreover, $M$ is in \emph{standard greedy form} if it does not contain
\begin{equation}
I=
\begin{bmatrix}
\label{mtx:badOne}
1 & & 1\\
1 & & 0
\end{bmatrix}
\end{equation}
as an induced submatrix. Finally, Hoffman et al.~\cite{hoffman1985} proved the following result.

\begin{theorem}[Hoffman et al.~\cite{hoffman1985}]
\label{thm:balancedAndGreedy}
Matrix $M$ is totally balanced if and only if $M$ can be converted into greedy standard form by permuting its rows and columns.
\end{theorem}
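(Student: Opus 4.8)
The statement is an equivalence, and I would treat its two directions quite differently; the forward direction carries essentially all of the difficulty.

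\smallskip\noindent\emph{Standard greedy form $\Rightarrow$ totally balanced.} I would argue by contradiction. Suppose $M$ contains no copy of $I$ as a submatrix yet is not totally balanced; then, after permuting rows and columns, some $k\times k$ submatrix $N$ with $k\ge 3$ is the incidence matrix of a cycle — every row sum and every column sum of $N$ equals $2$, and the bipartite graph of $N$ is a single chordless cycle of length $2k\ge 6$. Let $\ell$ be the leftmost column of $N$; it has exactly two $1$'s, in rows $r$ and $r'$. Because the cycle has length at least three, the other $1$ of row $r$ lies in a column $d$ and the other $1$ of row $r'$ lies in a column $d'\neq d$, both strictly to the right of $\ell$; moreover $N[r,d']=N[r',d]=0$, since $r$ and $r'$ carry no further $1$'s in $N$. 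Hence if $r$ lies above $r'$, the submatrix of $N$ on rows $(r,r')$ and columns $(\ell,d)$ equals $I$; and if $r'$ lies above $r$, the submatrix on rows $(r',r)$ and columns $(\ell,d')$ equals $I$ — either way contradicting the hypothesis on $M$.

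\smallskip\noindent\emph{Totally balanced $\Rightarrow$ standard greedy form.} This is the substantive direction and the main obstacle, since one must now actually produce the permutation. My plan is to route through a \emph{doubly lexical ordering} of $M$: it is known that the rows and columns of any $0/1$ matrix can be arranged so that the columns, read from top to bottom, form a lexicographically nonincreasing sequence, and likewise the rows read from left to right. I would then show that, for a totally balanced $M$, this ordering contains no copy of $I$. Suppose it did, say $M[r_1,c_1]=M[r_1,c_2]=M[r_2,c_1]=1$ and $M[r_2,c_2]=0$ with $r_1<r_2$ and $c_1<c_2$. Columns $c_1$ and $c_2$ agree at row $r_1$, yet $c_1$ precedes $c_2$ in the order, so there is a row $\rho$ at which $c_1$ has a $1$ and $c_2$ a $0$; chasing such lexical witnesses — alternately through the column order and the row order, with each freshly produced $0$ again certified by surrounding $1$'s — one threads the certificates into a closed alternating configuration, i.e.\ a permutation of the incidence matrix of a cycle of length at least three, contradicting total balancedness. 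Making this chaining rigorous, and in particular showing that it closes up into an honest cycle, is the technical heart of the proof; it is also what pins down the doubly lexical ordering as the concrete output of the algorithm.

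\smallskip\noindent The hard part, then, is exactly this extraction of a forbidden cycle from a copy of $I$ in a doubly lexical order, tracking how the lexical comparisons propagate. As a fallback I would use that $M$ is totally balanced iff its bipartite incidence graph is chordal bipartite, together with the fact that every chordal bipartite graph has a bisimplicial edge; eliminating bisimplicial edges recursively builds the ordering by induction, but converting that elimination sequence into the explicit row-and-column permutation requires the same bookkeeping, so I would not expect a genuine saving over the doubly lexical route.
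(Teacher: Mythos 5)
The paper does not prove this statement at all: it is imported verbatim from Hoffman, Kolen and Sakarovitch~\cite{hoffman1985}, so there is no in-paper argument to compare yours against. Judged on its own, your easy direction is complete and correct: extracting the forbidden $2\times 2$ pattern from the leftmost column of a chordless-cycle submatrix, with the case split on which of the two incident rows comes first, is exactly right, and it correctly uses that each row of such a submatrix carries only two ones, so the off entries $N[r,d']$ and $N[r',d]$ really are zero. (One caveat inherited from the paper rather than from you: condition (i) as literally stated in the paper would also forbid the $2\times 2$ all-ones submatrix, under which the theorem is false; your implicit restriction to chordless cycles of length at least $6$, i.e.\ square submatrices of order at least $3$, is the correct reading of~\cite{hoffman1985}.)

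The hard direction, however, is a plan rather than a proof. The route you choose --- take a doubly lexical ordering and show it contains no copy of $I$ (is ``$\Gamma$-free'') when $M$ is totally balanced --- is the standard one, due essentially to Hoffman et al.\ and to Lubiw, but the entire content of that route is the witness-chasing lemma you describe in one sentence and explicitly defer. One must show that the alternating sequence of lexical certificates either terminates in an immediate contradiction or closes up into the incidence matrix of a chordless cycle; nothing in your write-up establishes that the chain is finite, that the rows and columns it visits are pairwise distinct, or that the final certificate links back to the initial copy of $I$ --- precisely the points where a careless version of this argument produces a closed walk rather than an honest cycle, and where the choice of lexical conventions (which end of a row or column is most significant) must be pinned down for the witnesses to point in a consistent direction. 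Until that lemma is proved, or the bisimplicial-edge induction in your fallback is carried out (which, as you note, needs the same bookkeeping), the ``if'' direction of the theorem remains unproven.
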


Let $M\in \{0, 1\}^{\lvert P\rvert\times \lvert G\rvert}$ be the incident matrix corresponding to the IP~\eqref{ip:vtg} of an instance of the VTG problem. Consider the permutation of rows and columns on $M$ such that:
\begin{itemize}
\item the first $\lvert V_{RC}(T)\rvert$ rows of $M$ from top to bottom correspond to the convex vertices in $V_{RC}(T)$ ordered from left to right,
\item the rest of the rows of $M$ from top to bottom (i.e., following the first $\lvert V_{RC}(T)\rvert$ rows) correspond to the convex vertices in $V_{LC}(T)$ ordered from right to left,
\item the first $\lvert V_{RR}(T)\rvert$ columns of $M$ from left to right correspond to the reflex vertices in $V_{RR}(T)$ ordered from right to left, and
\item the rest of the columns of $M$ from left to right (i.e., following the first $\lvert V_{RR}(T)\rvert$ columns) correspond to the reflex vertices in $V_{LR}(T)$ ordered from left to right.
\end{itemize}

Figure~\ref{fig:matrixDecompose} shows an illustration of permuting the rows and columns of $M$; note that the arrows indicate the direction by which the corresponding vertices are visited on $T$. We prove that the resulting matrix $M$ is totally balanced.

\begin{figure}[t]
\centering%
\includegraphics[width=.55\textwidth]{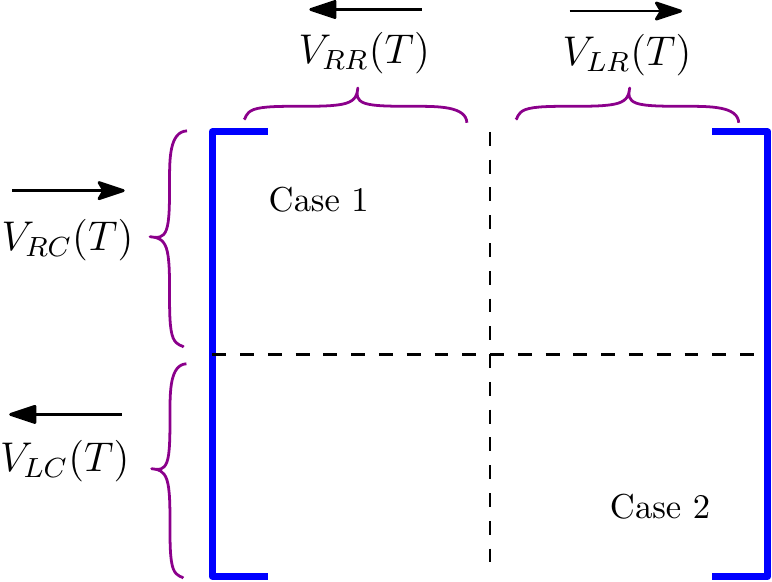}
\caption{An illustration of permuting the rows and columns of matrix $M$. An arrow indicates the direction by which the corresponding vertices are visited on $T$.}
\label{fig:matrixDecompose}%
\end{figure}

\begin{lemma}
\label{lem:totallyBalanced}
Matrix $M$ is totally balanced.
\end{lemma}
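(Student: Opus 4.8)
The plan is to use Theorem~\ref{thm:balancedAndGreedy}: since a matrix is totally balanced iff some row/column permutation puts it in standard greedy form, it suffices to check that $M$ — \emph{with the rows and columns already permuted exactly as prescribed above} — contains no induced copy of the matrix $I$ from~\eqref{mtx:badOne}; that is, there are no rows $i<i'$ and columns $j<j'$ with $M_{ij}=M_{ij'}=M_{i'j}=1$ and $M_{i'j'}=0$. (The ``no two identical columns'' clause in the definition of totally balanced is then delivered by Theorem~\ref{thm:balancedAndGreedy} itself; alternatively one notes that duplicate columns correspond to redundant guards and may be discarded.)

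First I would record the block structure of $M$. By Lemma~\ref{lem:noRRorLLVisibility} a right reflex vertex sees no left convex vertex and a left reflex vertex sees no right convex vertex, so under the prescribed ordering every entry of the block with rows in $V_{RC}(T)$ and columns in $V_{LR}(T)$, and every entry of the block with rows in $V_{LC}(T)$ and columns in $V_{RR}(T)$, is $0$. Hence $M$ is block diagonal, $M=\begin{bmatrix}A & 0\\ 0 & B\end{bmatrix}$, where $A$ collects the $(V_{RC}(T)\times V_{RR}(T))$-entries and $B$ the $(V_{LC}(T)\times V_{LR}(T))$-entries. Any induced copy of $I$ has a $1$ in each of its two columns, so its two columns lie in one block and then its two rows lie in the matching row block; thus it is enough to show that neither $A$ nor $B$ contains $I$.

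The core step is the argument for $A$. Suppose $A$ contained such a configuration on right convex vertices $c_i,c_{i'}$ (row $i<i'$, so $x(c_i)<x(c_{i'})$ since $V_{RC}(T)$ is read left to right) and right reflex vertices $r_j,r_{j'}$ (column $j<j'$, so $x(r_{j'})<x(r_j)$ since $V_{RR}(T)$ is read right to left), meaning that $r_j$ sees both $c_i$ and $c_{i'}$ while $r_{j'}$ sees $c_i$ but not $c_{i'}$. Since the right reflex vertex $r_j$ sees the right convex vertex $c_i$, Observation~\ref{obs:visibilityCondition} gives $x(r_j)<x(c_i)$, hence
\[
x(r_{j'})<x(r_j)<x(c_i)<x(c_{i'}).
\]
Now apply the Order Claim (Lemma~\ref{lem:orderClaim}) to the four vertices $r_{j'},r_j,c_i,c_{i'}$, listed here in increasing $x$-order: we have that $r_{j'}$ sees $c_i$ and that $r_j$ sees $c_{i'}$, so the Order Claim yields that $r_{j'}$ sees $c_{i'}$ — contradicting $M_{i'j'}=0$. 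Therefore $A$ contains no induced $I$.

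The argument for $B$ is symmetric: for left convex vertices $c_i,c_{i'}$ with $x(c_{i'})<x(c_i)$ (since $V_{LC}(T)$ is read right to left) and left reflex vertices $r_j,r_{j'}$ with $x(r_j)<x(r_{j'})$ (since $V_{LR}(T)$ is read left to right), Observation~\ref{obs:visibilityCondition} now gives $x(r_j)>x(c_i)$, so $x(c_{i'})<x(c_i)<x(r_j)<x(r_{j'})$, and the Order Claim applied to $c_{i'},c_i,r_j,r_{j'}$ forces $r_{j'}$ to see $c_{i'}$, again a contradiction. Combining these facts, the permuted $M$ is in standard greedy form, and Theorem~\ref{thm:balancedAndGreedy} then gives that $M$ is totally balanced. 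The only delicate point is lining the four prescribed orderings up so that the common guard $r_j$, the ``lost'' vertex $c_{i'}$, and the witness guard $r_{j'}$ occupy the roles the Order Claim requires; this is precisely why $V_{RR}(T)$ and $V_{LC}(T)$ are read in the reverse direction, and checking it in each of the two blocks is the part that needs care rather than ingenuity.
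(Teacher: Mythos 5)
Your proof is correct and follows essentially the same route as the paper's: both reduce to checking the standard greedy form via Theorem~\ref{thm:balancedAndGreedy}, use Lemmas~\ref{lem:noRRorLLVisibility} and~\ref{lem:dominantVertex} to split into the $V_{RC}\times V_{RR}$ and $V_{LC}\times V_{LR}$ blocks, and in each block derive the four-point $x$-ordering from Observation~\ref{obs:visibilityCondition} and the chosen row/column directions so that the Order Claim forces the forbidden $0$ entry to be a $1$. The only cosmetic difference is that you phrase the case split as a block-diagonal decomposition of $M$, whereas the paper argues directly that the two rows of the induced copy of $I$ must lie on the same side.
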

\begin{proof}
By Theorem~\ref{thm:balancedAndGreedy}, it suffices to show that $M$ is in standard greedy form. Suppose for a contradiction that $M$ is not in standard greedy form; that is, $M$ contains the matrix $I$ in~\eqref{mtx:badOne} as an induced submatrix. Let $i_1, i_2$ (resp., $j_1, j_2$) be the indices of the rows (resp., columns) of $I$ in $M$. Moreover, let $p_1$ and $p_2$ (resp., $g_1$ and $g_2$) be the convex vertices (resp., reflex vertices) of $T$ corresponding to $i_1$ and $i_2$ (resp., $j_1$ and $j_2$), respectively. First, we note that either $i_1, i_2 \leq \lvert V_{RC}(T)\rvert$ or $i_1, i_2> \lvert V_{RC}(T)\rvert$ because the reflex vertex $g_1$ cannot see a right convex and a left convex of $T$ at the same time by Lemma~\ref{lem:dominantVertex}.

\noindent{\bf Case 1: $i_1, i_2 \leq \lvert V_{RC}(T)\rvert$.} If $i_1, i_2 \leq \lvert V_{RC}(T)\rvert$, then $p_1, p_2\in V_{RC}(T)$. By Lemma~\ref{lem:noRRorLLVisibility}, no left reflex vertex of $T$ can see $p_1$. This means that both $g_1$ and $g_2$ are right reflex vertices of $T$; that is, $j_1, j_2 \leq \lvert V_{RR}(T)\rvert$. But, if both $g_1$ and $g_2$ are right reflex vertices of $T$, then we get the ordering $x(g_2)<x(g_1)<x(p_1)<x(p_2)$ by Observation~\ref{obs:visibilityCondition}; see Figure~\ref{fig:mainLemma}(a) for an illustration. By Lemma~\ref{lem:orderClaim}, vertex $g_2$ must see vertex $p_2$, which is a contradiction.

\begin{figure}[t]
\centering%
\includegraphics[width=.90\textwidth]{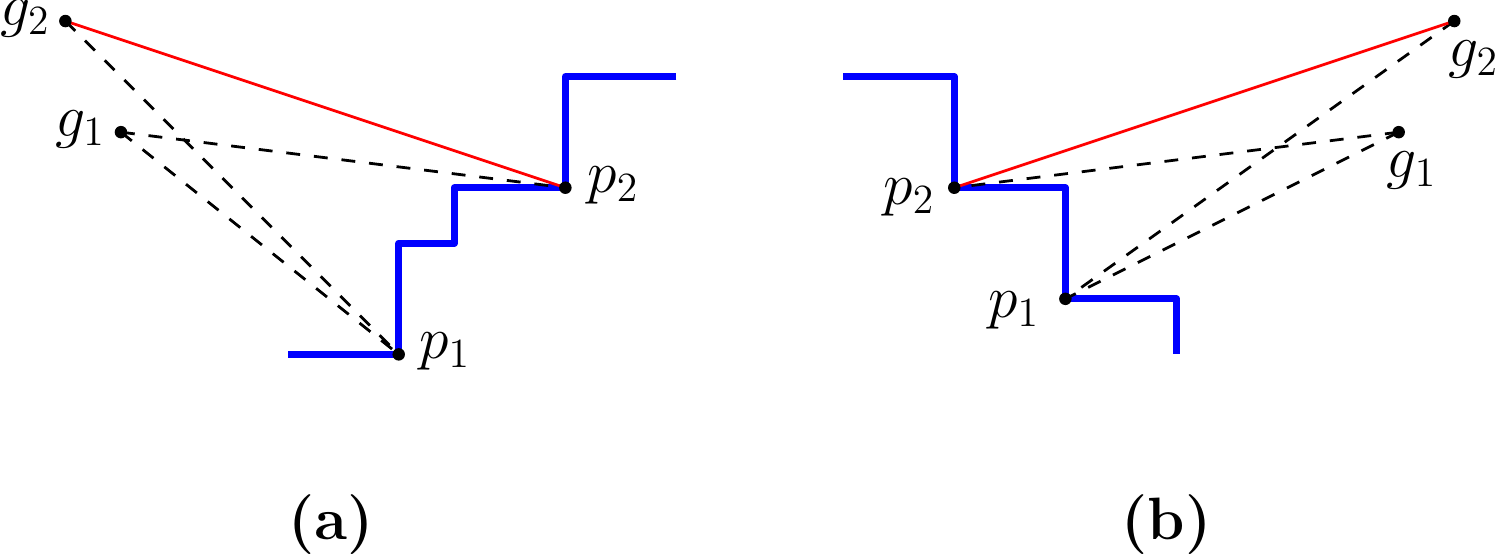}
\caption{An illustration in support of the proof of Lemma~\ref{lem:totallyBalanced}.}
\label{fig:mainLemma}%
\end{figure}

\noindent{\bf Case 2: $i_1, i_2 > \lvert V_{RC}(T)\rvert$.} If $i_1, i_2 > \lvert V_{RC}(T)\rvert$, then $p_1, p_2\in V_{LC}(T)$. By Lemma~\ref{lem:noRRorLLVisibility}, no right reflex vertex of $T$ can see $p_1$. This means that both $g_1$ and $g_2$ are left reflex vertices of $T$; that is, $j_1, j_2 > \lvert V_{RR}(T)\rvert$. But, if both $g_1$ and $g_2$ are left reflex vertices of $T$, then we get the ordering $x(p_2)<x(p_1)<x(g_1)<x(g_2)$ by Observation~\ref{obs:visibilityCondition}; see Figure~\ref{fig:mainLemma}(b) for an illustration. By Lemma~\ref{lem:orderClaim}, vertex $g_2$ must see vertex $p_2$, which is a contradiction.

By Cases 1 and 2, we conclude that $M$ is in standard greedy form and, therefore, it is a totally balanced matrix. This completes the proof of the lemma.
\end{proof}

We know that for a totally balanced matrix $M$, the polyhedron $\{x \geq 0 : Mx \geq 1\}$ is integral. Moreover, Kolen~\cite{kolen1982} gave an efficient algorithm for an optimal integer solution for the IP of form~\eqref{ip:vtg}. Therefore, we have the main result of this paper:

\begin{theorem}
\label{thm:mainResult}
There exists a polynomial-time exact algorithm for the VTG problem on orthogonal terrains.
\end{theorem}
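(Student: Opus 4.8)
The plan is to assemble the pieces already in hand into an algorithm and check that every step runs in polynomial time. First I would construct the instance of IP~\eqref{ip:vtg} explicitly: read off $V(T)$, classify each vertex as left/right convex/reflex, and for every pair $(r_i,c_j)\in V_R(T)\times(V_{LC}(T)\cup V_{RC}(T))$ decide whether $r_i$ sees $c_j$ under the strict-visibility convention. Each such test amounts to checking that the open segment $r_ic_j$ lies strictly above $T$, which can be done in $O(n)$ time by walking along the edges of $T$ between $x(r_i)$ and $x(c_j)$; hence the full incidence matrix $M_0\in\{0,1\}^{\lvert P\rvert\times\lvert G\rvert}$ is obtained in $O(n^3)$ time (and easily faster), where the rows index the convex vertices to be guarded and the columns index the candidate reflex guards.

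Next I would apply the row/column permutation described just before Lemma~\ref{lem:totallyBalanced} to $M_0$, obtaining the matrix $M$. Since permuting rows and columns merely relabels the constraints and the variables, the feasible integer points of the permuted IP are in bijection with those of the original one and the objective value is preserved; thus it suffices to solve the permuted IP and then undo the permutation on the chosen columns. By Lemma~\ref{lem:totallyBalanced}, $M$ is totally balanced.

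Then I would invoke the two cited consequences of total balancedness. On the one hand, the polyhedron $\{x\geq 0 : Mx\geq 1\}$ is integral, so an optimal basic feasible solution of the LP relaxation of~\eqref{ip:vtg} is already $0/1$-valued and hence optimal for the IP itself; combined with any polynomial-time LP solver this gives a polynomial-time exact algorithm. On the other hand — which I would state to keep the procedure purely combinatorial — Kolen's algorithm~\cite{kolen1982} solves covering programs of the form~\eqref{ip:vtg} with a totally balanced constraint matrix to optimality in polynomial time; feeding it $M$ together with the all-ones cost vector returns an optimal set $S\subseteq V_R(T)$. Reading off the reflex vertices corresponding to the selected columns (after reversing the permutation) yields the desired minimum guard set.

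I do not anticipate a real obstacle here, since all the structural difficulty is concentrated in Lemma~\ref{lem:totallyBalanced}. The only points needing care are (a) confirming that the visibility predicate is polynomial-time evaluable under the strict ``lies strictly above $T$'' convention, in particular handling the degenerate case in which $r_i$ and $c_j$ are adjacent on $T$ (which, as in the discussion preceding Lemma~\ref{lem:noRRorLLVisibility}, is a non-visibility), and (b) making explicit that the permutation is cost-preserving so that optimality transfers back to the original instance. Bundling the running times of the matrix construction, the permutation, and Kolen's algorithm gives a polynomial overall bound, which completes the proof.
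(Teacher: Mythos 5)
Your proposal matches the paper's argument: the paper likewise derives the theorem directly from Lemma~\ref{lem:totallyBalanced} via the integrality of $\{x\geq 0 : Mx\geq 1\}$ for totally balanced $M$ and Kolen's combinatorial algorithm for covering IPs of the form~\eqref{ip:vtg}. Your added care about constructing the visibility matrix in polynomial time and about the permutation being cost-preserving is correct but only fills in routine details the paper leaves implicit.
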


%%%%%%%%%%%%%%%%%%%%%%%% NEW SECTION %%%%%%%%%%%%%%%%%%%%%%%%%%%
%%%%%%%%%%%%%%%%%%%%%%%%%%%%%%%%%%%%%%%%%%%%%%%%%%%%%%%%%%%%%%

\section{Conclusion}
\label{sec:conclusion}
In this paper, we gave an exact algorithm for the problem of guarding the convex vertices of an orthogonal terrain $T$ with the minimum number of reflex vertices (i.e., the VTG problem). The complexity of the problem remains open if the objective is to guard \emph{all} the vertices of $T$ with the minimum number of (not necessarily reflex) vertex guards.
%We assumed that the visibility line segment of two points on $T$ that see each other lies strictly above $T$; our algorithm does not apply to the case in which the visibility line segment may lie on $T$.

%-------------------------------------------------------------------
%BIBLIOGRAPHY
\bibliographystyle{plain}
\bibliography{terrainRef}

\end{document}